\newtheorem{proposition}{Proposition}
\newtheorem{remark}{Remark}
\newcommand{\Tb}{T_\mathrm{b}}
\newcommand{\argmin}{\mathop{\mathrm{arg\,min}}}
\newcommand{\atx}{a_\mathrm{Tx}}
\newcommand{\arx}{a_\mathrm{Rx}}
\newcommand{\q}{\mathbf{q}}
\newcommand{\bb}{\mathbf{b}}
\def\endthebibliography{%
	\def\@noitemerr{\@latex@warning{Empty `thebibliography' environment}}%
	\endlist
}
\begin{document}

\title{\LARGE
        Fractionally-Spaced Equalization and Decision Feedback Sequence Detection  for  Diffusive MC
      }

\author{\IEEEauthorblockN{Trang Ngoc Cao,~\IEEEmembership{Student~Member,~IEEE,
		}  Vahid Jamali,~\IEEEmembership{Member,~IEEE,} Nikola Zlatanov,~\IEEEmembership{Member,~IEEE,}\\ Phee Lep Yeoh,~\IEEEmembership{Member,~IEEE,} Jamie Evans,~\IEEEmembership{Senior Member,~IEEE,} and Robert Schober,~\IEEEmembership{Fellow,~IEEE}
}
\thanks{
	This work was supported partially by the Australian Research Council Discovery Projects under Grants DP180101205 and DP190100770 and the German Ministry for Education and Research under the MAMOKO project. 
}%
\thanks{T. N. Cao and J. S. Evans are with the Department of Electrical and Electronic
	Engineering, University of Melbourne, Melbourne, VIC 3010, Australia
	(e-mail: ngocc@student.unimelb.edu.au; jse@unimelb.edu.au).}
\thanks{ V. Jamali and R. Schober are
	with the Institute for Digital Communications, Friedrich-Alexander-
	Universit\"at Erlangen-N\"urnberg, 91058 Erlangen, Germany (e-mail:
	vahid.jamali@fau.de;  robert.schober@fau.de).}
\thanks{N. Zlatanov is with the Department of Electrical and Computer Systems
	Engineering, Monash University, Clayton, VIC 3800, Australia
	(e-mail:	nikola.zlatanov@monash.edu).}
\thanks{P. L. Yeoh is with the School of Electrical and Information Engineering, University of Sydney, Sydney, NSW 2006, Australia
	(e-mail: phee.yeoh@sydney.edu.au).}
}

\maketitle

\begin{abstract}
	In this paper,  we consider diffusive molecular communication (MC) systems affected by signal-dependent diffusive noise, inter-symbol interference, and external  noise. We design linear and nonlinear fractionally-spaced  equalization
	schemes and a detection scheme which combines decision feedback and  sequence detection (DFSD).  In contrast to the symbol-rate equalization schemes in the MC literature, the proposed  equalization and detection schemes exploit multiple samples of the received signal per symbol interval to achieve  lower bit error rates (BERs) than  existing schemes.  The proposed  DFSD scheme achieves a  BER which is very close to that achieved by maximum likelihood sequence detection, but with lower computational complexity.	
\end{abstract}

\section{Introduction}

The diffusive channel has been widely considered for the design of molecular communication (MC) systems due to its simplicity without the need for special infrastructure or external energy \cite{JAW:18:Arxiv}. However, diffusion leads to signal-dependent noise and inter-symbol interference (ISI). 
Various approaches have been proposed in the MC literature to mitigate the impact of ISI including the use of more than one type of molecule \cite{MGM:18:JC,TPK:15:COML},  enzymes to  degrade the  information molecules \cite{NCS:14:INB},   adaptive threshold detection \cite{DH:16:JNB,CLY:18:NB},  reactive signaling \cite{JFS:19:MBSC,FPG:17:GLOBECOM}, matched filtering \cite{JAS:17:CL}, and   equalization \cite{KA:13:JSAC}. Among these techniques,  adaptive threshold detection \cite{DH:16:JNB},   matched filtering \cite{JAS:17:CL}, and equalization \cite{KA:13:JSAC} do not require more than one type of chemical in the system, which is beneficial for keeping the system design complexity low. For  adaptive threshold detection in  \cite{DH:16:JNB} and  equalization in \cite{KA:13:JSAC}, the received signal is  sampled only once per symbol interval. For  adaptive detection in \cite{CLY:18:NB} and the matched filter in \cite{JAS:17:CL}, multiple samples within one symbol interval are used for the detection of that symbol. In this work,  we improve  performance by  designing equalizers and  detectors which exploit multiple samples per symbol interval as well as multiple symbol intervals for  detection of one symbol. In particular, we design linear and nonlinear fractionally-spaced equalizers and a detection scheme combining decision feedback and sequence detection (DFSD).  

Fractionally-spaced equalization and DFSD have been considered  for the mitigation of ISI in conventional communication systems \cite[Chapter 9]{ Ben:15:CL,Lee:77:TCOM, Due:89:TC,Mar:07:SP,PRO:01:Book}. However, the  schemes in \cite[Chapter 9]{ Ben:15:CL,Lee:77:TCOM,Due:89:TC,Mar:07:SP,PRO:01:Book} are designed for independent additive noise whereas MC is also affected by signal-dependent diffusion noise. Moreover, to the best of the authors' knowledge, the performance of fractionally-spaced equalizers and   DFSD  has not yet been investigated for MC.

\section{System Model}

We consider a diffusive  MC system in an unbounded three dimensional  environment comprising  a  spherical transparent transmitter, a  spherical passive receiver,  and information molecules. Let $\atx$ and $\arx$ denote the radius of the transmitter and the receiver, respectively.  The  distance between the transmitter and the receiver is denoted by $r$. 
We  assume that the molecular movement is caused by Brownian motion  with diffusion coefficient $D$ and a uniform flow.  Let $v_1$ and $v_2$ denote the parallel and perpendicular components of the uniform flow  from the transmitter to the receiver, respectively.

The transmitter employs on-off keying modulation to convey information to the receiver. 
A sequence of $K$ symbols, with one  bit  per symbol, is transmitted. As shown in Fig.~\ref{fig:1}, the transmitted symbol  and the detected symbol at the receiver are denoted by $s_k \in \left\{0,1\right\}$ and $\hat{s}_k$,  $k=1,2,\dots, K$, respectively. At the beginning of the $k$-th symbol interval, at time $t_{k,0}$,  the transmitter releases $A$ molecules  to transmit bit $``1"$ or is silent to transmit bit $``0"$. We assume that  the probability of transmitting bit ${s}_k$ is $\Pr\left({s}_k=0\right)=\Pr\left({s}_k=1\right)=1/2$.

  		\begin{figure}[!t]
  			\centering
  			\includegraphics[scale=0.65, trim={0 0.6cm 0 0.6cm},clip]{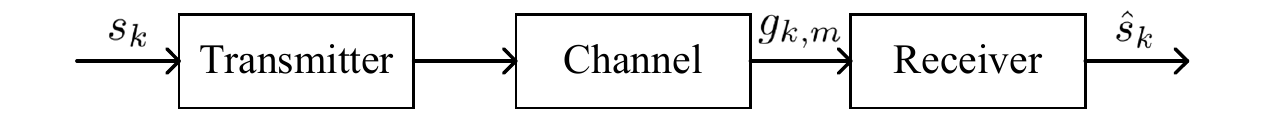}
  			\caption{
  				System model with input and output signals. 
  			}
  			\label{fig:1}
  		\end{figure}

 Let $\Tb$ denote the length of the symbol interval. We assume that the receiver collects $M$ samples of the received signal during each symbol interval. The $m$-th sample of the $k$-th symbol is denoted by $g_{k,m}, m=1,2,\dots,M$. Let $t_{k,m}=(k-1)\Tb+m \Delta t$ denote the $m$-th sampling time in the $k$-th symbol interval, where $ \Delta t\leq \Tb,$ is the sampling interval. For a passive receiver, $g_{k,m}$ denotes the number of molecules observed in the volume of the receiver at time $t_{k,m}$. 
 We  assume
that the $k$-th symbol is affected by
significant ISI  from the  $L - 1$ previous symbols. The combined impact of the ISI originating from symbols transmitted before the $(k-L+1)$-th symbol and the noise from external sources is modeled by an additive interference signal \cite{NCS:14:JSAC}. The additive interference signal has a constant expected value, denoted by $\eta$.
It is shown in \cite{NCS:14:JNB} that $g_{k,m}$ follows the Binomial distribution and that it can be well
 approximated as a Poisson or Gaussian random variable, where for typical MC applications the Poisson distribution is a more accurate approximation \cite{JAW:18:Arxiv}. In this work, we use the Poisson distribution to model  the received samples, and thus, we have   
\begin{align} \label{eq:1}
 g_{k,m} &\sim  \mathcal{P}\left(\sum_{l=1}^L s_{k-l+1} h_{l,m}+\eta\right),
\end{align}
where  $h_{l,m}$ is the expected number of molecules received at the receiver at time $t_{k,m}$ due to the release of
$A$ molecules at the transmitter at time  $t=(k-l)\Tb$. In other words, $h_{l,m}$ is received after a period, denoted by $T_{l,m}$, which is equal to $T_{l,m}=t_{k,m}-(k-l)\Tb=(k-1)\Tb+m \Delta t-(k-l)\Tb=(l-1)\Tb+m\Delta t$.
For a passive receiver, $h_{l,m}$ is given  by \cite{JAS:17:CL}
\begin{align} \label{eq:2}
h_{l,m}=\frac{A V}{(4\pi D T_{l,m})^{3/2}}\exp\left(\frac{-\left(r-v_1 T_{l,m}\right)^2+\left(v_2 T_{l,m}\right)^2}{4D T_{l,m}}\right),
\end{align}
where $V$ is the volume of the receiver.
 We assume that the received numbers of molecules from different transmissions are independent and $g_{k,m}$ is independent $\forall k,m$ \cite{NCS:14:JNB,JAW:18:Arxiv}.

\begin{remark}
	 The expression in \eqref{eq:1} shows that the MC system is affected by signal-dependent noise. Eq.~\eqref{eq:1} holds for both   passive receivers and  absorbing receivers and only the values of $h_{l,m}$ depend on the type of receiver. Therefore, the equalization and detection schemes proposed in the following sections can  be applied in systems employing  passive receivers or  absorbing receivers. Here, we adopt a passive receiver for the numerical results shown in Section~\ref{sec:5}.
\end{remark}

In the following, we design fractionally-spaced equalization and detection schemes for  ISI mitigation at the receiver.

\section{Fractionally-Spaced Equalization for MC} \label{sec:3}

In this section, we propose  linear and nonlinear fractionally-spaced equalizers for MC systems.  Due to the ISI, the symbol to-be-detected  is influenced by  previously transmitted symbols and the symbol to-be-detected  also influences  the following symbols. To exploit the resulting dependencies for  detection of the considered symbol, we use the received signals corresponding to the previous,  to-be-detected, and  following symbols for the design of the equalizers. 
\subsection{Linear Fractionally-Spaced  Equalization} \label{sub3:1}
\begin{figure*}[!tbp]
	\vspace{-0.6cm}
	\centering
		\hspace{-1.4cm}
	\subfloat[]{
		\includegraphics[scale=0.5, trim={0 0.6cm 0 0.6cm},clip]{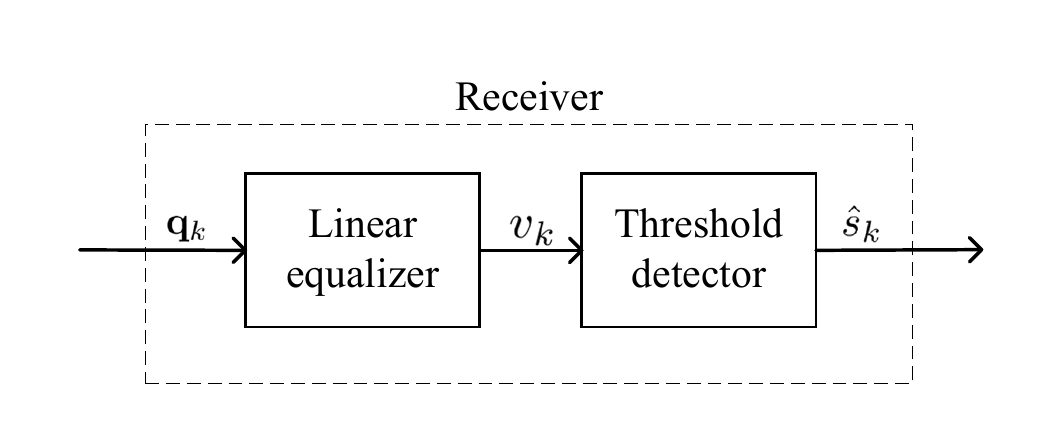}
		\label{fig:2}
	}
	\subfloat[]{
		\includegraphics[scale=0.5, trim={0 0.6cm 0 0.6cm},clip]{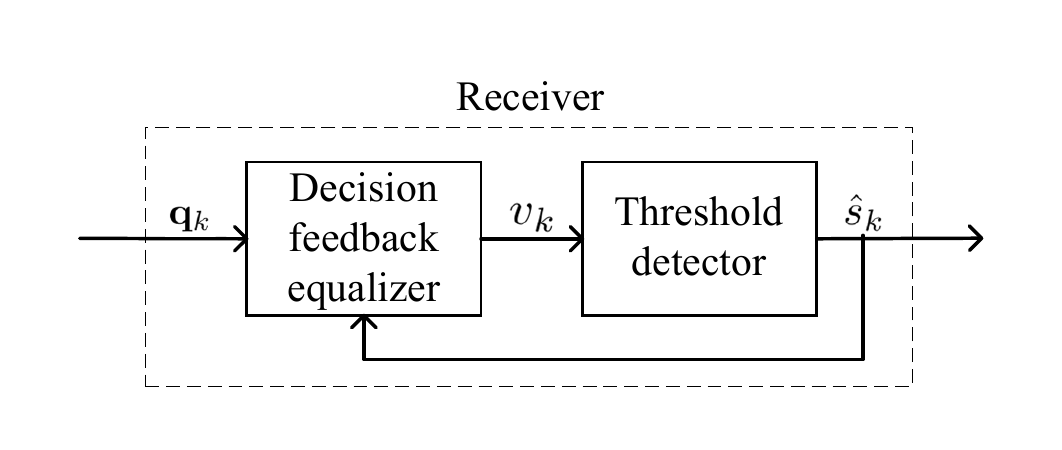}
		\label{fig:3}
	}
	\subfloat[]{
		\includegraphics[scale=0.5, trim={0.1cm 0.6cm 0 0.6cm},clip]{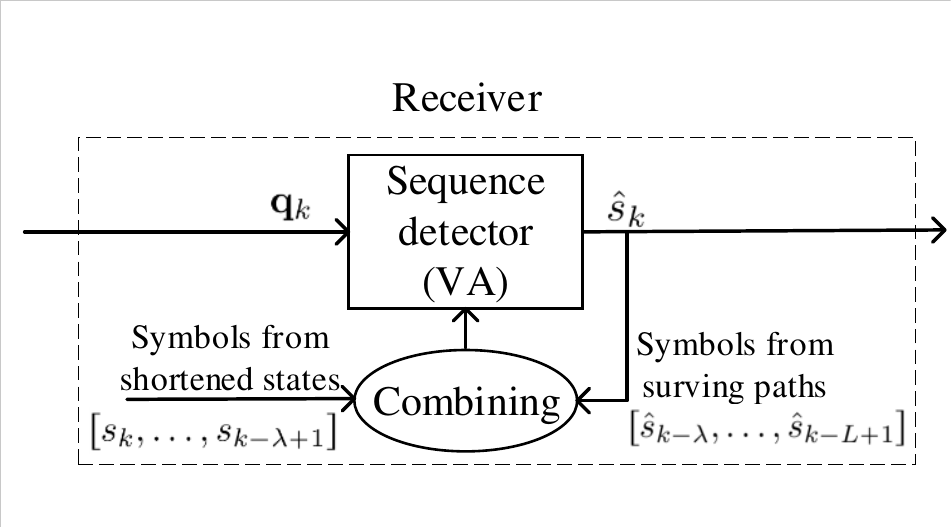}
		\label{fig:4}
	}
	\hspace{-1.3cm}
	\caption{Diagram of a receiver with
		(a) a linear equalizer and a threshold detector, (b) a decision feedback equalizer and a threshold detector, or (c) decision feedback sequence detection. 
	}
	\label{fig:dia}
\end{figure*}

In this subsection, we consider a linear fractionally-spaced equalizer. 
Let $\q_k$ and  $v_k$ denote the input vector and the output of the equalizer for the detection of the $k$-th symbol, respectively, see Fig.~\ref{fig:2}. Vector
$\q_k$ is comprised of the elements $g_{k,m}$ spanning $T$ symbol intervals before and after the current symbol, respectively. Specifically, the $i$-th element of $\q_k$, $i=\{1,\cdots, (2T+1)M\}$, is defined as
\begin{align}
q_k[i]\overset{\Delta}{=}\begin{cases}
g_{k-T-1+\lfloor i/M\rfloor,|i|_M} &\text{if } |i|_M\neq 0,\\
g_{k-T-1+\lfloor i/M\rfloor,M} &\text{if } |i|_M=0,
\end{cases}
\end{align}
where  $\lfloor\cdot\rfloor$ is the floor operation and $|\cdot|_M$ is the modulo $M$ operation.
The output $v_k$ is an affine function of the input sequence $\q_k$  given by
\begin{align} \label{eq:2}
v_k&=\sum_{i=1}^{(2T+1)M}b[i] q_k[i]+b^\mathrm{c}=\mathbf{b}^\mathsf{T}\mathbf{q}_k+b^\mathrm{c},
\end{align}
where   $\left\{b[i]\right\}$ and $b^\mathrm{c}$ are the $(2T+1)M+1$ tap weight coefficients of the equalizer, $b[i]$ is the $i$-th element of vector $\mathbf{b}$, and $\mathbf{b}^\mathsf{T}$ is the transpose of $\mathbf{b}$. Coefficient $b^\mathrm{c}$ is usually equal to zero for conventional wireless communication applications due to the zero-mean input and output signals. However, this may not be the case for  MC applications.
 Ideally,  $\mathbf{b}$ and $b^\mathrm{c}$ should be chosen to minimize the BER. However, the  expression for the BER is complicated and thus  the corresponding optimal tap weight coefficients  cannot be obtained in closed form. Therefore, we optimize the tap weight coefficients in terms of the minimum mean squared error (MMSE) between the equalizer's output and the transmitted symbol, which is expected to also result in a low BER.

Let $\varepsilon_k=s_k-v_k$ denote the error between the equalizer's output and the transmitted symbol. The linear fractionally-spaced equalizer that achieves the MMSE between the equalizer's output and the transmitted symbol is given as follows
\begin{align} \label{eq:13}
\left\{\mathbf{b}_{\mathrm{opt}},b^\mathrm{c}_{\mathrm{opt}}\right\}&=\argmin_{\left\{\mathbf{b},b^\mathrm{c}\right\}}\mathbf{E}\left\{\varepsilon_k^2\right\}=\argmin_{\left\{\mathbf{b},b^\mathrm{c}\right\}}\mathbf{E}\left\{\left(s_k-v_k\right)^2\right\}\nonumber\\
&=\argmin_{\left\{\mathbf{b},b^\mathrm{c}\right\}}\mathbf{E}\left\{\left(s_k-\mathbf{b}^\mathsf{T}\q_k-b^\mathrm{c}\right)^2\right\},
\end{align}
where $x_{\mathrm{opt}}$ and $\mathbf{E}\{\cdot\}$ denote the optimal value of $x$ and  expectation, respectively. The solution of \eqref{eq:13} is given in the following proposition. To this end, we define $\mathbf{H}$ and $\bm{\Gamma}$ as follows. The element in the $j$-th row ($1\leq j \leq 2T+1$) and the $m$-th column ($1 \leq m \leq M$) of matrix $\mathbf{H}$ is given  by
\begin{align} \label{eq:4}
\begin{cases}
\mathbf{H}[j,m]=h_{j-T,m}, \hspace{1cm} &\text{if } T+1\leq j\leq 2T+1,\\
\mathbf{H}[j,m]=0,& \text{otherwise.}
\end{cases}
\end{align}
The element in the  $i$-th row and the $i'$-th column of matrix $\bm{\Gamma}$ is given by 
	\begin{align} \label{eq:10}
\bm{\Gamma}[i,i']=
\begin{cases}
\frac{1}{2}\sum_{l=1}^L h_{l,|i|_M}+\frac{1}{4}\sum_{l=1}^Lh_{l,|i|_M}^2+\eta  &\text{if } i=i'\\
\frac{1}{4}\sum_{l=1}^{L-\lfloor (i-i')/M\rfloor} h_{l+\lfloor (i-i')/M\rfloor,|i|_M} h_{l,|i'|_M} &\text{if } i>i'\\
\frac{1}{4}\sum_{l=1}^{L-\lfloor (i'-i)/M\rfloor}h_{l,|i|_M} h_{l+\lfloor (i'-i)/M\rfloor,|i'|_M} &\text{if } i<i'.\\
\end{cases}
\end{align}

\begin{proposition} \label{prop:1}
	The optimal coefficients of the linear fractionally-spaced MMSE equalizer  are given by 
	\begin{align} \label{eq:9}
	\begin{cases}
	\mathbf{b}_\mathrm{opt}=\bm{\Gamma}^{-1}\bm{\xi},\\
	{b^\mathrm{c}}_\mathrm{opt}=\frac{1}{2}-\bm{\xi}^\mathsf{T}\bm{\Gamma}^{-1}\mathbf{E}\left\{\q_k\right\},
	\end{cases}
	\end{align}
	where    $\bm{\xi}=\frac{1}{4} \mathsf{vec}\left(\mathbf{H}^\mathsf{T}\right)$ and $\mathsf{vec}\left\{\mathbf{H}\right\}$  denote the vectorization  of  matrix $\mathbf{H}$.  
	The $i$-th  element of vector $\mathbf{E}\left\{\q_k\right\}$, i.e., the expectation of $\q_k$ over all possible values of transmitted sequences,  is given by
	\begin{align} \label{eq:11}
	\mathbf{E}\left\{\q_k[i]\right\}=\frac{1}{2}\sum_{l=1}^L h_{l,|i|_M}+\eta.
	\end{align}
\end{proposition}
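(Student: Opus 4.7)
The plan is to solve the quadratic optimization in \eqref{eq:13} by setting its gradient to zero and then identifying the resulting second-order statistics with the matrices $\mathbf{H}$ and $\bm{\Gamma}$ already defined in \eqref{eq:4}--\eqref{eq:10}. First, differentiating $\mathbf{E}\{(s_k-\mathbf{b}^\mathsf{T}\mathbf{q}_k-b^\mathrm{c})^2\}$ with respect to the scalar $b^\mathrm{c}$ and using $\Pr(s_k=1)=1/2$ yields $b^\mathrm{c}_\mathrm{opt}=\tfrac{1}{2}-\mathbf{b}^\mathsf{T}\mathbf{E}\{\mathbf{q}_k\}$, while differentiating with respect to $\mathbf{b}$ yields the Wiener--Hopf-type relation $\mathbf{E}\{s_k\mathbf{q}_k\}=\mathbf{E}\{\mathbf{q}_k\mathbf{q}_k^\mathsf{T}\}\mathbf{b}+b^\mathrm{c}\mathbf{E}\{\mathbf{q}_k\}$. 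Substituting the first expression into the second reduces the system to the centred normal equation $\mathrm{Cov}(\mathbf{q}_k)\,\mathbf{b}_\mathrm{opt}=\mathrm{Cov}(s_k,\mathbf{q}_k)$, so the proposition reduces to verifying that $\mathrm{Cov}(\mathbf{q}_k)=\bm{\Gamma}$ and $\mathrm{Cov}(\mathbf{q}_k,s_k)=\bm{\xi}$. Equation \eqref{eq:11} itself follows at once by taking the expectation of \eqref{eq:1} and using $\mathbf{E}\{s_j\}=1/2$.

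For the second moments I would use the tower property, conditioning on the symbol sequence and exploiting the assumption that the samples $g_{k,m}$ are conditionally independent across $(k,m)$ given the symbols, and Poisson with conditional mean $\mu_{k,m}=\sum_{l}s_{k-l+1}h_{l,m}+\eta$. For off-diagonal entries $i\neq i'$ one has $\mathrm{Cov}(q_k[i],q_k[i'])=\mathrm{Cov}(\mu_{k'_i,m_i},\mu_{k'_{i'},m_{i'}})$, where $(k'_i,m_i)$ is the symbol-sample pair encoded by $i$; pairwise independence of distinct symbols together with $\mathrm{Var}(s_j)=1/4$ kills every cross term except those in which the two conditional means share a common $s_j$, producing exactly the overlap sums on the second and third branches of \eqref{eq:10}. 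For the diagonal $i=i'$ the Poisson identity $\mathrm{Var}(g\mid s)=\mathbf{E}\{g\mid s\}$ contributes the extra linear term $\tfrac{1}{2}\sum_l h_{l,|i|_M}+\eta$, while the variance of the conditional mean supplies $\tfrac{1}{4}\sum_l h_{l,|i|_M}^{2}$. The cross-covariance $\mathrm{Cov}(q_k[i],s_k)$ is handled by the same conditioning: only the single term $h_{l,|i|_M}$ whose $l$ makes $s_{k'_i-l+1}=s_k$ survives, carrying coefficient $\mathrm{Var}(s_k)=1/4$; collecting these entries over $i$ reproduces $\tfrac{1}{4}\,\mathsf{vec}(\mathbf{H}^\mathsf{T})=\bm{\xi}$, which is precisely the nonzero pattern imposed on $\mathbf{H}$ by \eqref{eq:4}.

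The main obstacle is essentially the index bookkeeping: one must translate the scalar index $i$ into the pair (symbol-interval,\ sample) via the $\lfloor\cdot\rfloor$ and $|\cdot|_M$ operations, track which specific $s_{k-l+1}$ enters each conditional mean $\mu_{k',m}$, and then check that the sparsity of the resulting overlap structure matches the piecewise definition of $\bm{\Gamma}$ entry-by-entry. The only MC-specific ingredient beyond a textbook MMSE derivation is the Poisson identity $\mathrm{Var}(g\mid s)=\mathbf{E}\{g\mid s\}$, which is what produces the additional linear-in-$h$ term on the diagonal of $\bm{\Gamma}$ and distinguishes the signal-dependent-noise MC setting from the classical additive white Gaussian noise equalizer derivation.
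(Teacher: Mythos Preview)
Your proposal is correct and follows essentially the same approach as the paper: set the partial derivatives of the MSE with respect to $b^{\mathrm c}$ and $\mathbf{b}$ to zero, reduce to the centred normal equations, and then identify $\bm{\xi}=\mathrm{Cov}(\mathbf{q}_k,s_k)$ and $\bm{\Gamma}=\mathrm{Cov}(\mathbf{q}_k)$ by exploiting the independence of the $s_j$ (with $\mathrm{Var}(s_j)=1/4$) and the conditional independence of the $g_{k,m}$ given the symbol sequence. Your explicit use of the law of total (co)variance and the Poisson identity $\mathrm{Var}(g\mid s)=\mathbf{E}\{g\mid s\}$ to separate the diagonal of $\bm{\Gamma}$ into the $\tfrac12\sum_l h_{l,|i|_M}+\eta$ and $\tfrac14\sum_l h_{l,|i|_M}^2$ contributions is exactly what the paper's terse ``using the independence of the received signal samples and the independence of the $s_k$'' is pointing at.
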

\begin{proof}
	To obtain the optimal values of $\mathbf{b}$ and $b^\mathrm{c}$, we use the framework for designing a linear MMSE estimator for a non-zero mean variable in  \cite{Kay:93:Book}. Variables $\left\{\mathbf{b},b^\mathrm{c}\right\}_{\mathrm{opt}}$ are obtained by setting the partial derivatives of $\mathbf{E}\left\{\left(s_k-\mathbf{b}^\mathsf{T}\q_k-b^\mathrm{c}\right)^2\right\}$ with respect to $b^\mathrm{c}$ and $\mathbf{b}$ to  zero, respectively. Note that $\mathbf{E}\left\{s_k\right\}=1/2$ as  $\Pr\left({s}_k=0\right)=\Pr\left({s}_k=1\right)=1/2$, whereas $\bm{\xi}$ and $\bm{\Gamma}$ are obtained from $\mathbf{E}\left\{\left(s_k-\mathbf{E}\left\{s_k\right\}\right)\left(\q_k-\mathbf{E}\left\{\q_k\right\}\right)\right\}$ and $\mathbf{E}\left\{\left(\q_k-\mathbf{E}\left\{\q_k\right\}\right)^2\right\}$, respectively, by using the independence of the received signal samples and the independence of the $s_k$.
\end{proof}

%
%

\subsection{Decision-Feedback Equalization} \label{sub3:3}

In this subsection, we  consider a nonlinear fractionally-spaced equalizer. In particular, we design a decision-feedback equalizer (DFE)    employing two filters, a feedforward filter and a feedback filter \cite{PRO:01:Book}. The input to the feedforward filter, denoted by $\q_k$, is the vector of received samples from the $k$-th symbol and the $L_1$ following symbols. The input to the feedback filter, denoted by $\hat{\mathbf{s}}_{k-1,k-L_2}=[\hat{s}_{k-1},\cdots,\hat{s}_{k-L_2}]^\mathsf{T}$, is a vector containing the $L_2$   symbols detected prior to the $k$-th symbol,  see Fig.~\ref{fig:3}.  For a simple design, we use linear filters for the feedforward and feedback filters.
  The output of the fractionally-spaced DFE  is given by
\begin{align} \label{eq:30}
v_k&=\sum_{i=1}^{(L_1+1) M} b[i] q_k[i]-\sum_{\tau=1}^{L_2}a[\tau]\hat{s}_{k-\tau}+b^\mathrm{c}\\\nonumber
&=\mathbf{b}^\mathsf{T}\mathbf{q}_k-\mathbf{a}^\mathsf{T}\hat{\mathbf{s}}_{k-1,k-L_2}+b^\mathrm{c},
\end{align}
where $b[i]$ and $a[\tau]$ are the coefficients of the feedforward and feedback filters respectively, and  $b^\mathrm{c}$ is a constant coefficient. In \eqref{eq:30}, $b[i]$ and $q_k[i]$  are the $i$-th elements of vectors  $\bb$ and $\q_k$, respectively, $a[\tau]$ and $\hat{s}_{k-\tau}$ are the $\tau$-th elements of vectors $\mathbf{a}$ and $\hat{\mathbf{s}}_{k-1,k-L_2}$, respectively. 
 Here, $q_k[i]$ is given by 
\begin{align}
q_k[i]\overset{\Delta}{=}\begin{cases}
g_{k+\lfloor i/M\rfloor,|i|_M} &\text{if } |i|_M\neq 0,\\
g_{k+\lfloor i/M\rfloor,M} &\text{if } |i|_M=0.
\end{cases}
\end{align}

Due to the feedback of previous decisions in the DFE design, a closed-form
expression of the BER cannot be obtained and thus the DFE
filter cannot be optimized for minimization of the BER. Therefore, we 
 optimize the coefficients of the feedforward and feedback filters in terms of MMSE with the assumption that previous decisions are correct as follows 
\begin{align} \label{eq:31}
\left\{\mathbf{b},\mathbf{a},b^\mathrm{c}\right\}_{\mathrm{opt}}=\argmin_{\mathbf{b},\mathbf{a},b^\mathrm{c}}\mathbf{E}\left\{\varepsilon_k^2\right\}=\argmin_{\mathbf{b},\mathbf{a},b^\mathrm{c}}\mathbf{E}\left\{\left(s_k-v_k\right)^2\right\}.
\end{align}

The following proposition provides  the solution of \eqref{eq:31}.

\begin{proposition} \label{prop:4}
	The optimal coefficients, $\mathbf{b}_{\mathrm{opt}}$, $\mathbf{a}_{\mathrm{opt}}$, and $b_{\mathrm{opt}}^\mathrm{c}$, of the fractionally-spaced DFE in terms of MMSE, under the assumption that previous decisions are correct, are respectively given as follows.
    The vector $\mathbf{b}_{\mathrm{opt}}$  can be found from
	\begin{align} \label{eq:33}
	\left(\bm{\bar{\Gamma}}-4\mathbf{H}_{\mathbf{s}\q}\right)\mathbf{b}_{\mathrm{opt}}=\bar{\mathbf{h}},
	\end{align}
	where $\bm{\bar{\Gamma}}$ is identical to $\bm{\Gamma}$  in \eqref{eq:10} with $1\leq i \leq (L_1+1)M$ and $1\leq i' \leq (L_1+1)M$.  The element  in the $(j'M+m')$-th row and the $(jM+m)$-th column of $\mathbf{H}_{\mathbf{s}\q}$ in \eqref{eq:33} is given by
	\begin{align} \label{eq:34}
	\mathbf{H}_{\mathbf{s}\q}[j'M+m',jM+m]=\sum_{\tau=1}^{L_2} H_{{j',m'}}^{(\tau)} H_{{j,m}}^{(\tau)},
	\end{align}
	where ${H}_{{j',m'}}^{(\tau)}$ is given by
	\begin{align} \label{eq:35}
	H_{{j',m'}}^{(\tau)}&=\frac{1}{4} h_{\tau+j'+1,m'}.
	\end{align}
	$\bar{\mathbf{h}}=\mathrm{vec}\left(\bm{\zeta}^\mathrm{T}\right)$, where the element in the $j'$-th row and the $m'$-th column of $\bm{\zeta}$ is  $\zeta[j',m']=H_{{j',m'}}^{(0)}$.
	
	The $\tau$-th element of vector $\mathbf{a}_{\mathrm{opt}}$ is given by
	\begin{align} \label{eq:36}
	{a}_{\mathrm{opt}}[\tau]=4\sum_{i=1}^{(L_1+1)M}b_{\mathrm{opt}}[i]H_{{\lfloor i/M\rfloor,m}}^{(\tau)}.
	\end{align}
	Finally, $b_{\mathrm{opt}}^\mathrm{c}$ is given by
		\begin{align} \label{eq:37}
		b_{\mathrm{opt}}^\mathrm{c}&=\sum_{i=1}^{(L_1+1) M} b_\mathrm{opt}[i] \mathbf{E}\left\{\mathbf{q}_k[i]\right\}-\frac{1}{2}\sum_{\tau=1}^{L_2}a_\mathrm{opt}[\tau]+\frac{1}{2}.
		\end{align}
	
\end{proposition}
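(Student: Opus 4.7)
The plan is to mirror the MMSE framework used in the proof of Proposition~\ref{prop:1}, but now with three unknowns $(\mathbf{b},\mathbf{a},b^{\mathrm{c}})$. Under the stated hypothesis that past decisions are correct, I replace $\hat{s}_{k-\tau}$ by $s_{k-\tau}$ and study the MSE $J(\mathbf{b},\mathbf{a},b^{\mathrm{c}})=\mathbf{E}\{(s_k-\mathbf{b}^\mathsf{T}\mathbf{q}_k+\mathbf{a}^\mathsf{T}\mathbf{s}_{\mathrm{prev}}-b^{\mathrm{c}})^2\}$ with $\mathbf{s}_{\mathrm{prev}}=[s_{k-1},\dots,s_{k-L_2}]^\mathsf{T}$. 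Setting the three partial derivatives of $J$ to zero yields a coupled system of normal equations that I plan to decouple in the order $b^{\mathrm{c}}\to\mathbf{a}\to\mathbf{b}$.

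First, $\partial J/\partial b^{\mathrm{c}}=0$ gives $b^{\mathrm{c}}_{\mathrm{opt}}$ as an affine combination of $\mathbf{E}\{\mathbf{q}_k\}$ and $\mathbf{E}\{\mathbf{s}_{\mathrm{prev}}\}=(1/2)\mathbf{1}$, matching~\eqref{eq:37}. Substituting $b^{\mathrm{c}}_{\mathrm{opt}}$ back centers all the variables, so the remaining orthogonality conditions involve only the centered vectors $\tilde{s}_k$, $\tilde{\mathbf{q}}_k$, and $\tilde{\mathbf{s}}_{\mathrm{prev}}$. The condition with respect to $\mathbf{a}$ then simplifies dramatically: because the symbols are i.i.d.\ with $\Pr(s_k=0)=\Pr(s_k=1)=1/2$, we have $\mathbf{E}\{\tilde{s}_k\tilde{\mathbf{s}}_{\mathrm{prev}}\}=\mathbf{0}$ and $\mathbf{E}\{\tilde{\mathbf{s}}_{\mathrm{prev}}\tilde{\mathbf{s}}_{\mathrm{prev}}^\mathsf{T}\}=(1/4)\mathbf{I}$, leaving $\mathbf{a}=4\,\mathbf{E}\{\tilde{\mathbf{s}}_{\mathrm{prev}}\tilde{\mathbf{q}}_k^\mathsf{T}\}\mathbf{b}$. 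Computing $\mathbf{E}\{s_{k-\tau}g_{k+j,m}\}$ from \eqref{eq:1} by first conditioning on the transmitted sequence and then using $\mathbf{E}\{s_is_j\}=1/4$ for $i\neq j$ and $1/2$ for $i=j$ produces $\mathbf{E}\{\tilde{s}_{k-\tau}\tilde{q}_k[i]\}=h_{\tau+j+1,m}/4=H^{(\tau)}_{j,m}$, which reproduces~\eqref{eq:36}.

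Plugging this expression for $\mathbf{a}$ into the orthogonality condition $\mathbf{E}\{\tilde{\mathbf{q}}_k\tilde{\mathbf{q}}_k^\mathsf{T}\}\mathbf{b}-\mathbf{E}\{\tilde{\mathbf{q}}_k\tilde{\mathbf{s}}_{\mathrm{prev}}^\mathsf{T}\}\mathbf{a}=\mathbf{E}\{\tilde{s}_k\tilde{\mathbf{q}}_k\}$ gives $\bigl(\bar{\bm{\Gamma}}-4\,\mathbf{E}\{\tilde{\mathbf{q}}_k\tilde{\mathbf{s}}_{\mathrm{prev}}^\mathsf{T}\}\mathbf{E}\{\tilde{\mathbf{s}}_{\mathrm{prev}}\tilde{\mathbf{q}}_k^\mathsf{T}\}\bigr)\mathbf{b}=\bar{\mathbf{h}}$. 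The autocovariance of $\mathbf{q}_k$ has the same Poisson-plus-symbol-variance structure as $\bm{\Gamma}$ in~\eqref{eq:10}, which supplies $\bar{\bm{\Gamma}}$. The outer product of the two rank-$L_2$ cross-covariance matrices has $(i,i')$ entry $\sum_\tau H^{(\tau)}_{j,m}H^{(\tau)}_{j',m'}$, which is exactly $\mathbf{H}_{\mathbf{s}\mathbf{q}}$ as defined in~\eqref{eq:34}--\eqref{eq:35}. Finally, $\mathbf{E}\{\tilde{s}_k\tilde{q}_k[i]\}$ is the same calculation as in the $\mathbf{a}$-step specialized to $\tau=0$, giving $h_{j+1,m}/4=H^{(0)}_{j,m}=\zeta[j,m]$, so the right-hand side assembles to $\bar{\mathbf{h}}=\mathsf{vec}(\bm{\zeta}^\mathsf{T})$, establishing~\eqref{eq:33}.

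The main obstacle is purely bookkeeping: (i) translating the linear index $i$ of $\mathbf{q}_k$ to the $(j,m)=(\lfloor i/M\rfloor,|i|_M)$ pair consistently so the entries of $\bar{\bm{\Gamma}}$ and $\mathbf{H}_{\mathbf{s}\mathbf{q}}$ line up with the stated formulas, and (ii) carrying the Poisson second-moment identity $\mathrm{Var}(g_{k,m}\mid\mathbf{s})=\mathbf{E}\{g_{k,m}\mid\mathbf{s}\}$ through the derivation of $\bar{\bm{\Gamma}}$, which is where the signal-dependent-noise term in~\eqref{eq:10} arises. No conceptually new ideas beyond those in Proposition~\ref{prop:1} are required; the additional work is entirely due to the coupling of $\mathbf{b}$ and $\mathbf{a}$, which the decoupling order above resolves by Schur-complement-style substitution.
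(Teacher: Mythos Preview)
Your proposal is correct and follows essentially the same approach as the paper: the paper's proof simply states that the optimal coefficients are obtained by setting the partial derivatives of $\mathbf{E}\{(s_k-v_k)^2\}$ with respect to $\mathbf{a}$, $\mathbf{b}$, and $b^{\mathrm{c}}$ to zero (pointing to \cite{GBS:71:COMT} for the framework) and invokes $\mathbf{E}\{s_k\}=\tfrac12$, $\mathbf{Var}\{s_k\}=\tfrac14$ from the i.i.d.\ equiprobable-bit assumption. Your write-up is a fleshed-out version of exactly this derivation, with the additional (and helpful) explicit decoupling order $b^{\mathrm{c}}\to\mathbf{a}\to\mathbf{b}$ and the cross-covariance computations spelled out.
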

\begin{proof}
The coefficients of the DFE	whose output is given by \eqref{eq:30} are derived by setting the  partial derivatives of $\mathbf{E}\left\{\left(s_k-v_k\right)^2\right\}$ with respect to $\mathbf{a},\mathbf{b}$, and $b^\mathrm{c}$  equal to zero, respectively. 
A detailed framework can be found in \cite{GBS:71:COMT}. The mean and variance of the binary $s_k$ are given by $\mathbf{E}\left\{s_k\right\}=\frac{1}{2}$ and $\mathbf{Var}\left\{s_k\right\}=\frac{1}{4}$, respectively, due to the assumption that bits $``0"$ and $``1"$ are transmitted independently and with equal probabilities.
	\end{proof}

\section{Detection Schemes}

In this section, we first review the maximum likelihood sequence detector (MLSD), which is the optimal detection scheme, and discuss the simple symbol-by-symbol threshold detector.  MLSD is used as a benchmark. The threshold detector is used in combination with  the  equalizers proposed in Section~\ref{sec:3} and has a lower computational complexity compared to  MLSD. We then propose a detector combining decision feedback and sequence detection to achieve a better performance than threshold detection.

\subsection{Maximum Likelihood Sequence Detection} \label{sub4:1}

Let $p(\mathbf{q}|\mathbf{s}_{K,1})$ be the joint probability density function (PDF) of the received signal vector  $\mathbf{q}$
conditioned on the transmitted symbol sequence  $\mathbf{s}_{K,1}=[s_K, s_{K-1}, \cdots, s_1]$. Here,  $\mathbf{q}=\mathsf{vec}\left(\bm{\Omega}^\mathsf{T}\right)$  and the element in the $k$-th row and $m$-th column of matrix $\bm{\Omega}$ is equal to $g_{k,m}$. The MLSD is given by \cite{PRO:01:Book}
\begin{align} \label{eq:18}
\hat{\mathbf{s}}_{K,1}&=\underset{\mathbf{s}_{K,1}}{\arg} \max p(\mathbf{q}|\mathbf{s}_{K,1})\\ \nonumber
&\overset{(a)}{=}\underset{\mathbf{s}_{K,1}}{\arg} \max \prod_{k=1}^K \prod_{m=1}^M p_{g_{k,m}|\mathbf{s}_{K,1}}(g_{k,m}),\\ \nonumber
&=\underset{\mathbf{s}_{K,1}}{\arg} \max \sum_{k=1}^K \sum_{m=1}^M \ln p_{g_{k,m}|\mathbf{s}_{K,1}}(g_{k,m}),
\end{align}
where $\ln$ is the natural logarithm, $\hat{\mathbf{s}}_{K,1}=[\hat{s}_K, \hat{s}_{K-1}, \cdots, \hat{s}_1]$ contains the detected symbols corresponding to sequence $\mathbf{s}_{K,1}$, $p_{g_{k,m}|\mathbf{s}_{K,1}}(g_{k,m})$ is the PDF of the received signal $g_{k,m}$ conditioned on the transmitted symbol sequence  $\mathbf{s}_{K,1}$ and can be obtained from \eqref{eq:1}, and $(a)$ is due to the mutual independence of the $g_{k,m}$.

For a channel with a memory of $L-1$ symbols, the MLSD in \eqref{eq:18} can be performed recursively using the Viterbi algorithm (VA) \cite{PRO:01:Book,NCS:14:JNB}. The VA has $2^{L-1}$ states and each state  at time $k-1$ is defined by $\mathbf{s}_{k-1,k-L+1}=[s_{k-1}, s_{k-2}, \cdots, s_{k-L+1}]$.  The transition from state $\mathbf{s}_{k-1,k-L+1}$ to $\mathbf{s}_{k,k-L+2}$ depends on $s_k$. Due to \eqref{eq:18}, the accumulated path metrics $PM_k(\mathbf{s}_{k,k-L+2})$ of  state $\mathbf{s}_{k,k-L+2}$  are given by
\begin{align} \label{eq:19}
PM_k\left(\mathbf{s}_{k,k-L+2}\right)&=PM_{k-1}\left(\mathbf{s}_{k-1,k-L+1}\right)\\\nonumber
&\quad+\sum_{m=1}^M \ln p_{g_{k,m}|s_k, \dots, s_{k-L+1}}(g_{k,m}).
\end{align} 
At each state $\mathbf{s}_{k,k-L+2}$, the surviving path is selected by maximizing the path metrics $PM_k\left(\mathbf{s}_{k,k-L+2}\right)$ with respect to  $s_k$, see \cite{PRO:01:Book} for details.

\subsection{Symbol-by-Symbol Threshold Detection}\label{sub4:2}

The symbol-by-symbol threshold detection is given by
\begin{align}\label{eq:42}
\hat{s}_k=\begin{cases}
1, \text{ if } v_k\geq \gamma  \\
0, \text{ otherwise},
\end{cases}
\end{align}
where $\gamma$ is the detection threshold.
Considering \eqref{eq:13} and \eqref{eq:31}, we  choose $\gamma=\mathbf{E}\left\{s_k\right\}=\frac{1}{2}$ for both proposed equalizers.

 As mentioned above, the BER of the system with a nonlinear equalizer cannot be given in closed-form. Hence, we will evaluate it numerically. Here, we  derive the BER of the system with the linear equalizer.

 Due to the channel memory of $L$ symbols  and the input of  the equalizer including $2T+1$ symbols,     the sequence that affects the detection of $s_k$ is  $\mathbf{s}_{k+T,k-L-T+1}=[s_{k+T},\cdots, s_k,\cdots, s_{k-T}, \cdots, s_{k-L-T+1}]$. Then, $s_k$, which  we want to detect from $v_k$, is the $(T+1)$-th element of $\mathbf{s}_{k+T,k-L-T+1}$. 
 From \eqref{eq:42}, the BER is obtained as
 \begin{align} \label{eq:43}
 P_e&=\hspace{-1mm}\sum_{\forall \mathbf{s}_{k+T,k-L-T+1}}\hspace{-1mm}P_e^c[s_k|\mathbf{s}_{k+T,k-L-T+1}] \Pr\left\{\mathbf{s}_{k+T,k-L-T+1}\right\}\nonumber\\
 &=\frac{1}{2^{2T+L}}\sum_{\forall \mathbf{s}_{k+T,k-L-T+1}}P_e^c[s_k|\mathbf{s}_{k+T,k-L-T+1}],
 \end{align}
 where 
 \begin{align} \label{eq:44}
 &P_e^c[s_k|\mathbf{s}_{k+T,k-L-T+1}]\\\nonumber
 &\hspace{1.1cm}=\Pr\left\{v_k<\gamma|\mathbf{s}_{k+T,k-L-T+1},s_k=1\right\}\Pr\left\{s_k=1\right\}\\\nonumber
 &\hspace{1.1cm}\quad+\Pr\left\{v_k\geq\gamma|\mathbf{s}_{k+T,k-L-T+1},s_k=0\right\}\Pr\left\{s_k=0\right\}
 \end{align}
 and $\Pr\left\{s_k=0\right\}=\Pr\left\{s_k=1\right\}=1/2$.
 In order to obtain the BER, we require $\Pr\left\{v_k<\gamma|\mathbf{s}_{k+T,k-L-T+1},s_k=1\right\}$ and $\Pr\left\{v_k\geq\gamma|\mathbf{s}_{k+T,k-L-T+1},s_k=0\right\}$.
 Since the output of the equalizer is a weighted sum of Poisson random variables,  its PDF is not available in closed form. Therefore,  we  approximate the Poisson distribution of the input of the linear equalizer in \eqref{eq:1} by a Gaussian distribution. Then, since a linear combination of Gaussian random variables is also Gaussian distributed,  the output of the linear equalizer for a particular  sequence of  information symbols, i.e., $\mathbf{s}_{k+T,k-L-T+1}$, approximately follows the following Gaussian distribution 
 \begin{align} \label{eq:37}
 v_k\sim\mathcal{N}\left(\mu_\alpha\left(\mathbf{s}_{k+T,k-L-T+1}\right),\sigma_\alpha^2\left(\mathbf{s}_{k+T,k-L-T+1}\right)\right),\nonumber \\\hfill \alpha \in \left\{0,1\right\},
 \end{align}
 where $\mu_\alpha\left(\mathbf{s}_{k+T,k-L-T+1}\right)$ and $\sigma_\alpha^2\left(\mathbf{s}_{k+T,k-L-T+1}\right)$ are given by
 \begin{align}
 \begin{cases}
 \mu_0\left(\mathbf{s}_{k+T,k-L-T+1}\right)=\frac{1}{2}+\bm{\xi}^\mathsf{T}\bm{\Gamma}^{-1}\left(\mathsf{vec}(\bm{\nu}_{0}^\mathsf{T})-\mathbf{E}\left\{\q_k\right\}\right),
 \\
 \mu_1\left(\mathbf{s}_{k+T,k-L-T+1}\right)=\frac{1}{2}+\bm{\xi}^\mathsf{T}\bm{\Gamma}^{-1}\left(\mathsf{vec}(\bm{\nu}_{1}^\mathsf{T})-\mathbf{E}\left\{\q_k\right\}\right), 
 \\
 \sigma_0^2\left(\mathbf{s}_{k+T,k-L-T+1}\right)=\bm{\xi}^\mathsf{T}\bm{\Gamma}^{-1}\left(\mathsf{diag}\left\{\mathsf{vec}(\bm{\nu}_{0}^\mathsf{T})\right\}\right)\left(\bm{\Gamma}^{-1}\right)^\mathsf{T}\bm{\xi}, 
 \\
 \sigma_1^2\left(\mathbf{s}_{k+T,k-L-T+1}\right)=\bm{\xi}^\mathsf{T}\bm{\Gamma}^{-1}\left(\mathsf{diag}\left\{\mathsf{vec}(\bm{\nu}_{1}^\mathsf{T})\right\}\right)\left(\bm{\Gamma}^{-1}\right)^\mathsf{T}\bm{\xi}, 
 \end{cases}
 \end{align}
 where the element in the $j$-th row, $j=1,\dots, 2T+1$, and the $m$-th column of $\bm{\nu}_{\alpha}$, with $\alpha=s_k=\left\{0,1\right\}$, is given by
 \begin{align} \label{eq:38}
 \bm{\nu}_{\alpha}[j,m]=
 \sum_{l=1}^L  h_{l,m}s_{k-T+j-l}+\eta.
 \end{align}

 Hence, we have
 \begin{align} \label{eq:25}
 &\Pr\left\{v_k<\gamma|\mathbf{s}_{k+T,k-L-T+1},s_k=1\right\}\\\nonumber
 &\hspace{3cm}=1-\mathbf{Q}\left(\frac{\gamma-\mu_1\left(\mathbf{s}_{k+T,k-L-T+1}\right)}{\sigma_1\left(\mathbf{s}_{k+T,k-L-T+1}\right)}\right)
 \end{align}
 and
 \begin{align}\label{eq:26}
 &\Pr\left\{v_k\geq\gamma|\mathbf{s}_{k+T,k-L-T+1},s_k=0\right\}\\\nonumber
 &\hspace{3cm}=\mathbf{Q}\left(\frac{\gamma-\mu_0\left(\mathbf{s}_{k+T,k-L-T+1}\right)}{\sigma_0\left(\mathbf{s}_{k+T,k-L-T+1}\right)}\right),
 \end{align}
 where $\mathbf{Q}$ is the Gaussian Q-function. From \eqref{eq:25}, \eqref{eq:26}, \eqref{eq:44}, and \eqref{eq:43}, we obtain the approximate BER for MC with linear equalization.

 \subsection{Decision Feedback Sequence Detection} \label{sub4:3}

 The complexity of the VA depends on the  channel memory. For MLSD, the number of states of the VA for binary modulation is $2^{L-1}$. The complexity can be reduced by reducing the number of states, i.e., shortening the channel memory. We adopt DFSD \cite{Due:89:TC} that uses the VA with a reduced number of states  $2^{\lambda-1}$, $\lambda\leq L$. For metric calculation, the first $\lambda-1$ symbols, $[s_{k-1},s_{k-2},\dots, s_{k-\lambda+1}]$, are defined by a state and the remaining symbols, $[s_{k-\lambda}, s_{k-\lambda-1}, \dots, s_{k-L+1}]$, are taken from the surviving path of that state, i.e., $[\hat{s}_{k-\lambda}, \hat{s}_{k-\lambda-1}, \dots, \hat{s}_{k-L+1}]$, see Fig.~\ref{fig:4}. Note that $[\hat{s}_{k-\lambda}, \hat{s}_{k-\lambda-1}, \dots, \hat{s}_{k-L+1}]$ can be different for each of the   $2^{\lambda-1}$ states. For the VA for DFSD, the accumulated path metrics $PM_k(\mathbf{s}_{k,k-\lambda+2})$ of  state $\mathbf{s}_{k,k-\lambda+2}$  is given by
 \begin{align} \label{eq:40}
 &PM_k\left(\mathbf{s}_{k,k-\lambda+2}\right)=PM_{k-1}\left(\mathbf{s}_{k-1,k-\lambda+1}\right)\\\nonumber
 &\hspace{1.7cm}+\sum_{m=1}^M \ln p_{g_{k,m}|s_k, \dots, s_{k-\lambda+1}, \hat{s}_{k-\lambda}, \dots, \hat{s}_{k-L+1}}(g_{k,m}).
 \end{align}

\section{Numerical Results}     \label{sec:5}

In this section, we illustrate the performance of the proposed equalizers and detectors in terms of the BER. For   linear equalization, the results are obtained by both analysis and simulation. The results of the nonlinear equalizers and sequence detectors are obtained by  simulation only since  the  BERs of these schemes cannot be derived in  closed-form.   
  We set $\Delta t=T_\mathrm{b}/M$ and $\Tb$ is normalized by $t_\mathrm{peak}$ as $\Tb=\beta t_\mathrm{peak}$, where $t_\mathrm{peak}$ is the time when the expected number of molecules  at the receiver peaks \cite[Eq.~(4), Eq.~(6)]{NCS:14:GLOBECOM}. We use the following system parameters $D=\SI[per-mode=symbol]{4.3e-10}{\meter \per \second^2}$ ,
 $r=\SI{5e-7}{\meter}$,
  $V=\frac{4}{3}\pi\left(5\times 10^{-8}\right)^3 \si{\meter^3}$,
  $v_1=v_2=\SI[per-mode=symbol]
{3e-3}{\meter\per\second}$,
  $M=3$,
  $L=5$,
  $\beta=1.5$,
  $T=1$,
  $L_1=1$,
  $L_2=3$, and
  $\lambda=2$. Note that $L$ is chosen so that for the last tap, the channel impulse response (CIR) $h_{l,m}$ has decayed to  a value of $0.089$, i.e., equal to $0.4\%$  of the value of $h_{l,m}$ at $t_\mathrm{peak}$. Moreover,  the value of $M$ cannot be arbitrarily large  for the independence between the observations to hold.  We simulate $10^7$ transmissions of information bits to obtain the numerical results.

 Fig.~\ref{fig:7} presents the BER as a function of the number of molecules released by the transmitter for the proposed schemes and  benchmark schemes. The proposed schemes are the linear fractionally-spaced equalizer (Subsection~\ref{sub3:1}),   fractionally-spaced DFE (Subsection~\ref{sub3:3}), and  DFSD scheme (Subsection~\ref{sub4:3}). The benchmark schemes from the literature are the symbol-rate equalizer \cite{KA:13:JSAC},  matched filter \cite{JAS:17:CL}, and  MLSD using the VA \cite{PRO:01:Book}. The symbol-rate equalizer \cite{KA:13:JSAC}  is a linear equalizer that uses the received signals sampled at  $t_\mathrm{peak}$ in $T$ symbol intervals before and after the  symbol to be detected. The matched filter uses $M$ samples taken in one symbol interval for one symbol detection.  As expected, in Fig.~\ref{fig:7}, the BER decreases with increasing number of released molecules.
 With the proposed linear fractionally-spaced equalizer, the BER is reduced significantly compared to the BER when using the matched filter or the symbol-rate equalizer. This demonstrates that fractionally-spaced equalization is needed to effectively mitigate ISI in MC. Interestingly, the fractionally-spaced  DFE yields a BER that is not much lower than the BER for the linear fractionally-spaced equalizer. This is because the CIR is favorable for the linear equalizer, i.e., the Z-transform of the CIR does not have zeros close to the unit circle limiting the noise enhancement caused by linear equalization. Moreover,  DFE causes error propagation which eliminates part of the gain it achieves over linear equalization.   DFSD with $2$ states reduces error propagation and can  reduce the BER compared to fractionally-spaced DFE and closely approaches the performance of  MLSD with $2^4=16$ states. 
Moreover, for the linear equalizers, we observe that the analytical results obtained by approximating the Poisson distribution by a Gaussian distribution match well with the simulation results.

\section{Conclusions}

In this paper, we designed fractionally-spaced equalization and detection techniques for diffusive MC, namely the linear fractionally-spaced equalizer, fractionally-spaced DFE,  and DFSD. Due to signal-dependent noise in  diffusive MC channels, the designs of the equalizers and detectors for diffusive MC are different from those  for conventional wireless communication systems.   Our results reveal that significant reductions of the BER are achieved with fractionally-spaced equalizers compared to the symbol-rate equalizer and  matched filter receivers from the  MC literature.  Albeit having a much lower complexity, the proposed DFSD is able to achieve
a similar BER as  MLSD.

	\begin{figure}[!t]
		\centering
		\includegraphics[scale=0.7]{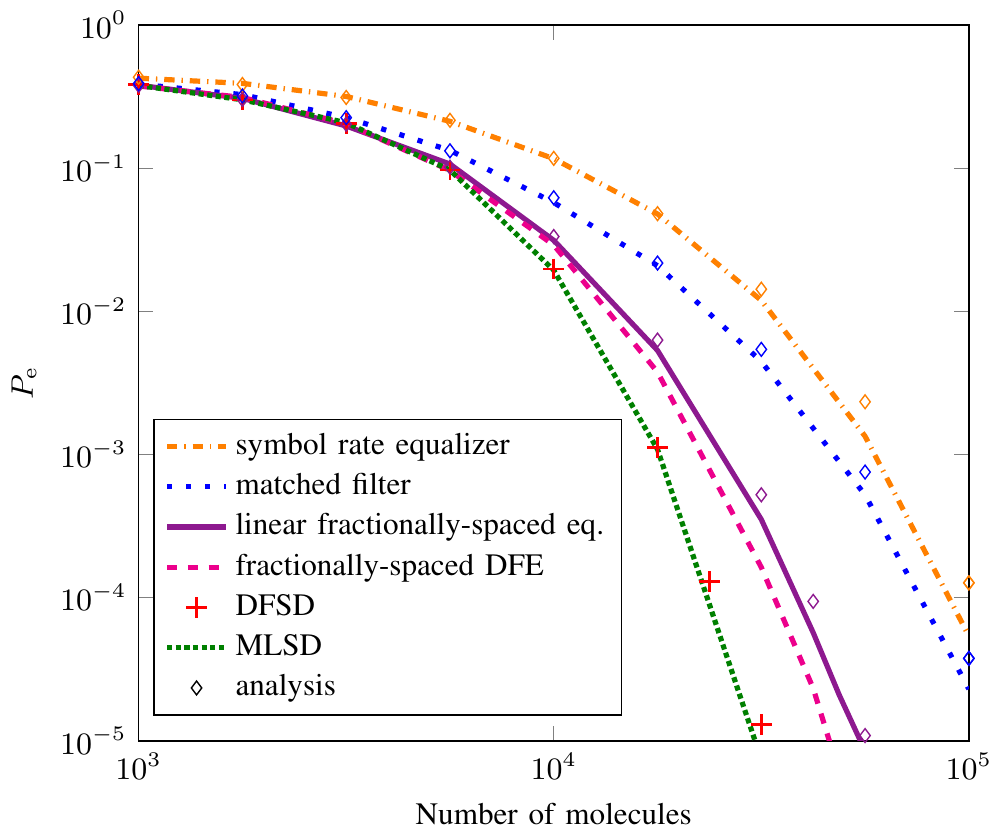}
		\caption{
			BER as a function of the number of molecules for the proposed  and benchmark schemes. 
		}
		\label{fig:7}
	\end{figure}

\bibliographystyle{IEEEtran}
\bibliography{IEEEabrv,MolecularBib}

\end{document}